\pgfplotsset{compat=1.4}
\newcommand{\eps}{\varepsilon}
\newcommand{\Oish}{\widetilde{O}}
\newcommand{\rpp}{\texttt{RP}}
\newcommand{\zz}{\mathbb{Z}}
\newcommand{\rr}{\mathbb{R}}
\newcommand{\Chlamtac}{Chlamt\'a\v{c}}
\DeclareMathOperator{\dist}{dist}
\DeclareMathOperator{\rs}{\mbox{\tt RS}}
\newtheorem{theorem}{Theorem}
\newtheorem{lemma}{Lemma}
\newtheorem{definition}{Definition}
\newtheorem{claim}{Claim}
\newtheorem*{definition*}{Definition}
\newtheorem*{theorem*}{Theorem}
\newtheorem*{lemma*}{Lemma}
\title{Reachability Preservers: New Extremal Bounds and Approximation Algorithms\footnote{Short Version Appeared in SODA '18.  Work performed while authors were employed by Stanford University.}}
\author[1]{Amir Abboud\thanks{amir.abboud@weizmann.ac.il} }
\author[2]{Greg Bodwin\thanks{bodwin@umich.edu} }
\affil[1]{Weizmann Institute}
\affil[2]{University of Michigan}
\date{}
\begin{document}

\maketitle
\thispagestyle{empty}

\begin{abstract}
We define and study \emph{reachability preservers}, a graph-theoretic primitive that has been implicit in prior work on network design.
Given a directed graph $G = (V, E)$ and a set of \emph{demand pairs} $P \subseteq V \times V$, a reachability preserver is a sparse subgraph $H$ that preserves reachability between all demand pairs.

Our first contribution is a series of extremal bounds on the size of reachability preservers.
Our main result states that, for an $n$-node graph and demand pairs of the form $P \subseteq S \times V$ for a small node subset $S$, there is always a reachability preserver on $O(n+\sqrt{n |P| |S|})$ edges.
We additionally give a lower bound construction demonstrating that this upper bound characterizes the settings in which $O(n)$ size reachability preservers are generally possible, in a large range of parameters.

The second contribution of this paper is a new connection between extremal graph sparsification results and classical Steiner Network Design problems.
Surprisingly, prior to this work, the osmosis of techniques between these two fields had been superficial.
This allows us to improve the state of the art approximation algorithms for the most basic Steiner-type problem in directed graphs from the $O(n^{0.6+\eps})$ of \Chlamtac{}, Dinitz, Kortsarz, and Laekhanukit (SODA'17) to $O(n^{4/7+\eps})$.
\end{abstract}
\setcounter{page}{0}

\pagebreak

\section{Introduction}
\label{sec:intro}

In this paper we prove new results about the extremal structure of paths in directed graphs.
As a motivating example, suppose we are given an $n$-node directed graph $G$, a set of source nodes $S$ of size $|S|=n^{1/3}$, and a subset $P \subseteq S \times V$ of ``demand pairs'' of size $|P| = O(n^{2/3})$, such that for all demand pairs $(s, t)$ there exists an $s \leadsto t$ path in $G$.
Our goal is to remove as many edges from $G$ as possible while maintaining reachability for all demand pairs.
How many edges will we have to keep?
It is not hard to see that $O(n^{4/3})$ edges will suffice: for each source $s \in S$ we can keep a BFS tree at the cost of $\le n$ edges, and this will guarantee that $s$ still reaches all the nodes it used to reach.
But can we improve this $O(n^{4/3})$ bound to $O(n)$ in general?
Or are there instances where we are forced to keep at least $\Omega(n^{4/3})$ edges?
Or is the right answer somewhere in between?

Graph reachability is almost as basic of a notion as directed graphs themselves. 
It is ubiquitous in math, science, and technology.
Computational questions related to graph reachability are central to various fields. 
For example, the classical $\mathsf{NL}$ vs. $\mathsf{L}$ open question asks if one can find a directed path using small space.
We would arguably be in a much better shape for tackling all the fundamental questions involving reachability if we could give good answers to basic structural questions like the one above. 

To study questions along these lines, we abstract and study \emph{reachability preservers}:
\begin{definition} [Reachability Preservers]
Given a graph $G = (V, E)$ and a set of demand pairs $P \subseteq V \times V$, a subgraph $H$ is a \emph{reachability preserver} of $G, P$ if for all demand pairs $(s, t) \in P$, if there exists an $s \leadsto t$ path in $G$, then there also exists an $s \leadsto t$ path in $H$.
\end{definition}

In the following, we outline some of our results on the possibilities and limitations of constructing sparse reachability preservers of arbitrary input graphs.

\subsection{Reachability Preserver Upper Bounds}

Our main positive result is the following theorem, which improves on all previously known upper bounds by polynomial factors. 

\begin{theorem}[Sparse Reachability Preservers]
\label{thm:main}
For any $n$-node graph $G=(V,E)$, set of source nodes $S \subseteq V$, and set of demand pairs $P \subseteq S \times V$, there is a reachability preserver $H$ of size
$$|E(H)| = O\left(n+\sqrt{n |P| |S|}\right).$$
\end{theorem}

For comparison to prior work, it was known that all distances can be preserved (not just reachability) using
$$|E(H)| = O\left(\min\{ n^{2/3} |P| , n |P|^{1/2}, n|S| \}\right)$$
edges \cite{Bodwin21sicomp,CE06}.
In our original motivating example, these distance preserver bounds only give $|E(H)| = O(n^{4/3})$.
Yet in this same example, our new bound implies $|E(H)| = O(n)$.
In general, our theorem states that whenever $|P| = o(n |S|)$, there are much better reachability preservers than obtained by simply keeping spanning trees out of every source node.
Our main result also implies a new upper bound for reachability preservers in the general setting, where the demand pairs aren't necessarily source-restricted:
\begin{theorem} \label{thm:intropairwise}
For any $n$-node graph and set of demand pairs $P$, there is a reachability preserver $H$ of size
$$|E(H)| = O\left( n + (n|P|)^{2/3} \right).$$
\end{theorem}

%

The previous best bound was
$$|E(H)| = O\left( \min\left\{np^{1/2}, n^{1/2}p\right\} + n\right)$$
from the work of Coppersmith and Elkin \cite{CE06}, although this bound applies to subgraphs $H$ that more strongly try to preserve distances.\footnote{In more detail: one can straightforwardly reduce reachability preservers to the case where the input graph is a DAG, and although not explicitly stated in \cite{CE06}, their argument implies this upper bound for distance preservers in DAGs.}
As stated above, Theorem \ref{thm:main} is a purely existential bound, so we next discuss the efficient construction of sparse reachability preservers.
The folklore \emph{decremental greedy algorithm} suffices to build reachability preservers of optimal size in polynomial time: that is, starting with the input graph $G$, continually find and remove an edge $e$ as long as $G \setminus e$ still preserves reachability among all demand pairs.
Once this process halts, the final subgraph is the unique reachability preserver of itself, so by Theorem \ref{thm:main} it must have only $O(n + \sqrt{n|P||S|})$ edges.
Our next result demonstrates that there is actually a far more efficient polynomial-time algorithm than decremental-greedy:

\begin{theorem} [Fast Construction of Reachability Preservers] \label{thm:reach-construct-intro}
For an input graph $G = (V, E)$, set of source nodes $S$, and set of demand pairs $P$, there is a randomized algorithm that runs in $\Oish(|E||S|)$ time and computes a reachability preserver as in Theorem \ref{thm:main} with high probability.
\end{theorem}

A similar fast algorithm holds for Theorem \ref{thm:intropairwise}, although the runtime will still depend on the number of source nodes used by the demand pairs, even though the edge bound in Theorem \ref{thm:intropairwise} does not.
This algorithm (like the decremental greedy algorithm) actually has a stronger property called \emph{existential optimality}: essentially, if one can improve the upper bound for reachability preservers in Theorems \ref{thm:main} or \ref{thm:intropairwise}, then our algorithm automatically produces reachability preservers of this new improved size.

\subsection{Approximation Algorithms: The Directed Steiner Network Problem}

Let us consider an input to Theorem \ref{thm:main} where $|S|=n^{3/4}$ and $|P|=n^{5/4}$.
Theorem \ref{thm:main} guarantees the existence of a preserver on $O(n^{1.5})$ edges, and Theorem \ref{thm:reach-construct-intro} says that we can find this preserver efficiently.
It is easy to observe that, in \emph{the worst case} over all graphs $\Omega(|P|)=\Omega(n^{5/4})$ edges could be necessary, e.g. if the graph is a directed biclique.
However, from a real-world point of view, why should we expect our graphs to be worst case?
It could be that a particular input instance $G, S, P$ enjoys a much sparser reachability preserver than what extremal results like Theorem \ref{thm:main} can guarantee.

So, let us denote the number of edges in the sparsest possible reachability preserver of our given input instance by $OPT$.
Are there efficient algorithms that can find a reachability preserver with size close to $OPT$?
This question is a version of \emph{Directed Steiner Network} (DSN), one of the most basic ``Steiner" problems in directed graphs, which form a central topic of study in combinatorial optimization; for more discussion, see the survey of Kortsarz and Nutov \cite{KN07}.


\begin{definition}[Directed Steiner Network]
Given a weighted directed graph $G=(V,E)$ and a set of demand pairs $P$, find the reachability preserver $H$ of minimum total weight $\sum_{e \in E(H)} w(e)$.
\end{definition}

(Of course, prior work does not use the language ``reachability preserver,'' but this formulation of the problem is equivalent to the usual one.)
Computation of sparse reachability preservers corresponds to the setting where the input graph is unweighted, called \emph{Unweighted Directed Steiner Network} (UDSN).
There is a long history of approximation algorithms for DSN and UDSN; see Table \ref{tab:dsnalgs}.
This has culminated in a relatively recent breakthrough by \Chlamtac{}, Dinitz, Kortsarz, and Laekhanukit \cite{CDKL17}, who achieved a better approximation factor of $O(n^{3/5+\eps})$ for UDSN.
That is, if the sparsest possible reachability preserver of the given input instance has OPT edges, then the algorithm of \Chlamtac{} et al.\ runs in polynomial time and finds a reachability preserver that has $O(n^{3/5+\eps} \cdot OPT)$ edges.
Our contribution is a rather simple application of Theorem~\ref{thm:main} to break beyond the $n^{3/5}$ bound achieved by \Chlamtac{} et al.

\begin{table*}[t]
\begin{center}
\begin{tabular}{lll}
    \toprule
    \textbf{Approx Ratio} & \textbf{Notes} & \textbf{Citation} \\
    \midrule
	$\tilde{O}(k^{2/3})$    & & Charikar et al.~\cite{CCCDGGL99}\\
	$\tilde{O}(k^{1/2+\eps})$ & & Chekuri, Even, Gupta, and Segev~\cite{CEGS11}\\
	$O(n^{4/5+\eps})$ & & Feldman, Kortsarz, and Nutov \cite{FKN12}\\
	$O(n^{2/3+\eps})$ & & Berman et al.~\cite{BBMRY13}\\
	$O(n^{3/5+\eps})$ & Unweighted only & \Chlamtac{}, Dinitz, Kortsarz, and Laekhanukit \cite{CDKL17}\\
	$O(n^{4/7 + \eps})$ & Unweighted only & \textbf{this paper}\\
	$\Omega\left(2^{\log^{1-\eps} n}\right)$ & Assuming NP $\ne$ QuasiP & Dodis and Khanna \cite{DK99}\\
    \bottomrule
    \end{tabular}
    
    \caption{\label{tab:dsnalgs} Approximation ratios for DSN that can be achieved by a polynomial-time algorithm.  Here $n$ is the number of nodes in the input graph, $k$ is the number of demand pairs, and $\eps$ can be any positive absolute constant (which trades off with the exponent in the polynomial runtime of the algorithm).}
    \end{center}
\end{table*}



\begin{theorem}
For all $\eps>0$, there is a polynomial time algorithm for UDSN with approximation ratio $O(n^{4/7 + \eps})$.
\end{theorem}

The previous algorithms use a subroutine that attempts to connect a pair set $P$ at a low cost, under the assumption that the pairs in $P$ have many paths between them (called ``thick'' pairs; the ``thin" pairs are handled separately, using Linear Programming).
To do this, these algorithms use the \emph{hitting set technique}: they randomly sample a small subset of the nodes $S$ that intersect at least one path for each pair in $P$, with high probability, and then they connect all nodes appearing in $P$ to-and-from each node in $S$.
All previous papers that follow this approach for Steiner-type problems (e.g. \cite{FKN12,BBMRY13,CDKL17,DZ16}) upper bound the number of edges contributed by this step as $O(n|S|)$.
Our improvement comes from applying the upper bound of Theorem~\ref{thm:main} instead.
Since the hitting set technique is ubiquitous in the design of graph algorithms, we believe that our general approach has potential for further application in approximation algorithms and beyond.
Our new approximation algorithm is probably not the final say on this fundamental problem; rather, it is a proof of concept that approximation algorithms can benefit from extremal results.
Notably, all previous progress on this problem \cite{FKN12,BBMRY13,CDKL17,DZ16} has come from improved LP rounding techniques, so these simple extremal question about reachability preservers provide a new way forward.

How far can our approach be pushed?
A natural bound to hope for, suggested by Feldman et al.~\cite{FKN12} is $O(\sqrt{n})$ approximation: this would match the algorithm of Gupta et al.\ for Steiner-Network in \emph{undirected} graphs \cite{GHNR10}, and undirected graphs seem better understood.
Our approach would get an $O(\sqrt{n})$ approximation for UDSN, if we can get a positive answer to the fundamental extremal question, which we address in the next subsection: \emph{Are linear size reachability preservers always possible?}

\subsection{Reachability Preserver Lower Bounds}

Recall that the upper bound of Theorem~\ref{thm:main} was $O(n+\sqrt{n |P| |S|})$.
Perhaps fewer edges are always sufficient?
Most optimistically:

\begin{center}
{\em Do all $n$-node graphs and demand pairs $P$ admit a reachability preserver on $O(n+|P|)$ edges?}
\end{center}

Note that this question is answered affirmatively in \emph{undirected} graphs, using any spanning forest.
For \emph{distance} preservers in undirected graphs, the possibility of such linear size distance preservers was refuted by Coppersmith and Elkin \cite{CE06}, and the construction for refutation has been crucial to the resolution of longstanding open questions in the area of graph spanners \cite{AB17jacm,ABP17}.

One can apply known \emph{layering} techniques to convert the distance preserver lower bounds by Coppersmith and Elkin \cite{CE06} to reachability preserver lower bounds.
This yields the following basic lower bound in the pairwise setting:
\begin{theorem}
For any positive integers $d, n, p$, there is an $n$-node graph and set of $|P|=p$ demand pairs for which any reachability preserver $H$ has size
$$|E(H)| = \Omega\left(n^{\frac{2}{d+1}} p^{\frac{d-1}{d}} \right).$$
\end{theorem}
We consider this lower bound to be a bit weak and unsatisfying.
To highlight a knowledge gap that it leaves: Theorem \ref{thm:intropairwise} implies that $O(n)$ edges always suffice for a reachability preserver for $|P| = O(n^{1/2})$ demand pairs.
Our lower bound (with $d=2$) shows that $\omega(n)$ edges are needed sometimes when $|P| = \omega(n^{2/3})$.
This leaves a polynomial gap in our understanding of the possibilities of linear-size reachability preservers: how many demand pairs admit an $O(n)$-size reachability preserver in general?
We have been unable to close this gap, and we consider this to be a central open problem in the area of reachability preservers.

However, we are able to obtain much more gratifying lower bounds in the source-restricted $P \subseteq S \times V$ setting.
We prove: 
\begin{theorem} \label{thm:main-lb-intro}
For any positive integers $n, p, \sigma$ satisfying $n^4 \sigma^6 \le p^9$, there is an $n$-node graph $G = (V, E)$ and set of $|P|=p$ demand pairs, with $P \subseteq S \times V$, $|S| = \sigma$, for which any reachability preserver $H$ has size
$$|E(H)| = \Omega\left( n^{4/5} p^{1/5} \sigma^{1/5} \right).$$
\end{theorem}

The most important consequence of this theorem is that, together with Theorem \ref{thm:main}, significantly advances knowledge on the settings in which $O(n)$ size source-restricted reachability preservers are generally available (see Figure \ref{fig:sourcewiselinear}).
That is, Theorem \ref{thm:main} implies that $O(n)$ edges suffice when $p\sigma = O(n)$, and Theorem \ref{thm:main-lb-intro} implies that $\omega(n)$ edges are sometimes needed when $p\sigma = \omega(n)$.
This holds anywhere in the parameter regime $\sigma = o(n^{1/3})$ (since $\sigma  > n^{1/3}$ and $\sigma p > n$ implies that $n^4 \sigma^6 > p^9$).
In a small range of parameters (with $\sigma = \omega(n^{1/3})$ and $p\sigma = \omega(n)$) the problem remains open.
See Figure \ref{fig:sourcewiselinear} for a visualization of these bounds. 

At a technical level, our construction for Theorem \ref{thm:main-lb-intro} uses a similar discrete geometry toolkit as in \cite{CE06}, but it uses a more careful analysis of the geometry of the construction in order to force a small number of source nodes.

\begin{figure} [ht] \centering
\begin{tikzpicture} [scale=0.8]

\node at (5, -2) {\Huge $|P|$};
\node at (-2, 5) {\Huge $|S|$};
\node at (-1, 0) {$1$};
\node at (-1, 10) {$n$};
\node at (0, -1) {$1$};
\node at (10, -1) {$n$};

\draw [fill=black!20] (0, 0) -- (0, 10) -- (10, 10) -- cycle;
\node  at (3, 8) {must have $|S| \le |P|$};

\draw [fill=green!50] (0, 0) -- (5, 5) -- (10, 0) -- cycle;

\draw (5, 0.1) -- (5, 5) -- (9.9, 0.1) -- cycle;
\draw [fill=red!50]  (10, 0) -- (6.66, 3.33) -- (6.66, 6.66) -- (10, 10) -- cycle;

\node at (6, 5) {\Huge ?};

\node [align=center] at (3, 1) {\Huge\checkmark{}\\(already known)};
\node [align=center] at (7, 1) {\Huge\checkmark{}\\(Theorem \ref{thm:main})};
\node [align=center] at (8.33, 5) {\Huge $\times$\\(Theorem \ref{thm:main-lb-intro})};

\node at (5, -0.7) {$n^{1/2}$};
\end{tikzpicture}

\caption{\label{fig:sourcewiselinear} Given an $n$-node graph $G = (V, E)$ and demand pairs $P \subseteq S \times V$, for which sizes of $|P|, |S|$ is there necessarily a reachability preserver on only $O(n)$ edges?  This diagram shows the state of knowledge following this paper: the answer is \emph{yes} when $|S||P| \le O(n)$, it is \emph{no} when $|S||P| = \omega(n)$ and $|P| = \Omega(n^{2/3})$, and it is open otherwise.  The regime in which a positive answer was previously known (in green) is implicit in the work of Coppersmith and Elkin \cite{CE06}, and also from the trivial observation that $O(1)$ source nodes have a reachability preserver on $O(n)$ edges via spanning trees.}
\end{figure}

An intriguing open question is to connect extremal results and approximation algorithms in another direction: can we use our new lower bound graphs to improve the inapproximability bounds for DSN?

\subsection{Related Work}

\emph{Distance preservers} are subgraphs that preserve distance between demand pairs, not just reachability.
Distance preservers have been extensively studied, and they work as a primitive for other problems in network design \cite{BCE05,CE06,BV16,Bodwin21sicomp,AB16soda,AB17jacm, GR17, CDKL17, CGMW18}.
Some prior work for distance preservers has direct implications for reachability preservers.
For example, the following theorem follows from an argument in prior work on distance preservers:
\begin{theorem} [\cite{Bodwin21sicomp}]
Let $\rs(n)$ be the largest value such that every graph $G = (V, E)$ whose edge set can be partitioned into $n$ induced matchings has $O\left(\frac{n^2}{\rs(n)}\right)$ edges.
Then all $G, P$ has a reachability preserver on $O(|P|)$ edges whenever $|P| = \Omega\left(\frac{n^2}{\rs(n)}\right)$.
\end{theorem}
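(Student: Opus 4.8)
The plan is to obtain this statement as an immediate consequence of Bodwin's linear-size distance preserver theorem~\cite{Bodwin17}. That result asserts that, in the directed setting (and hence in particular for unweighted directed graphs), every $n$-node graph $G$ together with a pair set $P$ of size $|P| = \Omega(n^2/\rs(n))$ admits a \emph{distance} preserver --- a subgraph $H \subseteq G$ with $\dist_H(s,t) = \dist_G(s,t)$ for every $(s,t) \in P$ --- on only $O(|P|)$ edges. So the first step is simply to invoke this theorem on the given $G$ and $P$, which is legitimate precisely because we are in the regime $|P| = \Omega(n^2/\rs(n))$ (and note that in this regime $|P| = \omega(n)$, so an $O(|P|)$ bound already absorbs the trivial $\Omega(n)$ cost of spanning the vertex set).

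The second step is the observation that any distance preserver is automatically a reachability preserver. Fix a pair $(s,t) \in P$. If $s \leadsto t$ in $G$, then $\dist_G(s,t)$ is finite, so $\dist_H(s,t) = \dist_G(s,t)$ is finite and thus $s \leadsto t$ in $H$; and if $s \not\leadsto t$ in $G$, then since $H \subseteq G$ we trivially still have $s \not\leadsto t$ in $H$. Hence $H$ preserves the reachability of every pair in $P$, giving the desired reachability preserver on $O(|P|)$ edges. (If one prefers a version of~\cite{Bodwin17} phrased only for pairs at finite distance, first discard from $P$ every pair $(s,t)$ with $s \not\leadsto t$ in $G$; such pairs place no constraint on a reachability preserver, and the surviving set still has $O(|P|)$ pairs, so applying the theorem to it and keeping the same subgraph suffices.)

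I do not expect any real obstacle here: the entire content lies in~\cite{Bodwin17}, and the reduction is lossless because exact distance preservation is strictly stronger than reachability preservation while the edge bounds coincide. The only points meriting a quick sanity check are that the invoked distance preserver bound is indeed stated for \emph{directed} graphs (it is --- cf.\ the discussion of~\cite{Bodwin17,CE06} above) and that unreachable pairs are handled as noted above; for these reasons the statement is recorded as a remark rather than a theorem carrying a stand-alone proof.
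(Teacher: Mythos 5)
Your approach is sound and, as far as one can tell from the paper's terse remark, is essentially what the authors intend: they describe the result as ``implied directly from a corresponding theorem for distance preservers in~\cite{Bodwin17},'' which is precisely your observation that a distance preserver is automatically a reachability preserver. This does rest on the fact (which you correctly flag as requiring a sanity check) that the relevant theorem of~\cite{Bodwin17} is stated for \emph{directed} weighted graphs, which it is --- the consistency/branching-events machinery there does not use undirectedness.

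One minor imprecision in your parenthetical fallback: if~\cite{Bodwin17} were phrased only for reachable pairs and you discard the unreachable ones to form $P'$, it is not enough to note that $|P'| = O(|P|)$; the theorem's hypothesis $|P'| = \Omega(n^2/\rs(n))$ might now fail, since $|P'|$ could be tiny. The cleaner repair is either to invoke the theorem on $P$ as-is (a pair with $\dist_G(s,t)=\infty$ is trivially preserved by any subgraph $H \subseteq G$, so such pairs impose no constraint and there is no reason to exclude them), or, if the statement insists on finite distances, to pad $P'$ back up to size $|P|$ with dummy pairs of the form $(u,u)$ before applying the theorem. With that small adjustment the argument is airtight.
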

It is known that $\rs(n)$ is a superconstant function; in particular, the current bounds are
$$2^{\Omega(\log^* n)} \le \rs(n) \le 2^{O(\sqrt{\log n})}$$
due to \cite{Fox11, Elkin10, Behrend46}.
Another theorem that follows directly from the same paper gives a lower bound in the ``subset'' setting:
\begin{theorem}
For all positive integers $n, d, \sigma$, there is an $n$-node graph and a set of $|S| = \sigma$ source nodes such that any reachability preserver $H$ of the demand pairs $S \times S$ has size
$$|E(H)| \ge n^{\frac{2}{d+1}} \sigma^{\frac{(2d+1)(d-1)}{d(d+1)} - o(1)}.$$
\end{theorem}
This theorem is proved for \emph{undirected distance} preservers in \cite{Bodwin21sicomp}, but the construction happens to use a layered graph.
It can thus be converted to a lower bound on reachability preservers by simply directing the edges down the layers.

We next discuss some other work more indirectly related to reachability preservers.
\emph{Pairwise spanners} are a relaxation in which distances between demand pairs only need to be preserved \emph{approximately} \cite{Woodruff06,CGK13,KV15,Parter14,Kavitha17}.
A \emph{distance preserving minor} is a small minor of $G$ that preservers all distances in $P$ approximately \cite{Gupta01,CXKR06,EEST08,BG08,EGKRTT14,KNZ14,CGH16,GR16,GHP20,KR20,Cheung18,Filtser19}.
Although pairwise spanners are most commonly studied for undirected graphs, there are other notions of sparsification more amenable to directed graphs.
A \emph{roundtrip spanner} is a sparse subgraph in which all pairwise \emph{roundtrip} distances ($\dist(u, v) + \dist(v, u)$) are approximately preserved \cite{CW04,RTZ08,PRSTV18, CDG20}.
Perhaps most related to ours are \emph{transitive closure spanners} \cite{BGJRW12,Raskhodnikova10}, which also focus on preserving reachability properties.
The goal there is to find a small graph (not necessarily subgraph) that has the same reachability relation as the input graph, and which has diameter as small as possible.

In the special case of $P=\{s\} \times V$, there has been exciting recent progress in the fault-tolerant setting \cite{PP13,Parter15,BCR16,Choudhary16, GK17, Parter15} which essentially studied the following question:
Given a graph $G$ and a source $s$, what is the sparsest subgraph $H$ such that for all nodes in $v$ there are at least $k$ node (or edge) disjoint paths in $H$ iff there are in $G$.
The questions we study are the special case of $k=1$, but we consider more than one source.
Recently, follow-up work to this paper by Chakraborty and Choudhary~\cite{CC20} studied general fault-tolerant reachability preservers in the general pairwise setting.


%
%

\section{Reachability Preserver Upper Bounds}
\label{sec:overview}

%
%

We begin by proving our extremal upper bounds on the size of reachability preservers, and then we give our algorithm implementing these bounds with fast running time.

\subsection{Proof of Theorem \ref{thm:main}}

Let $G = (V, E)$ be an $n$-node directed graph, let $S \subseteq V$, and let $P \subseteq S \times V$ be a set of demand pairs.
Our goal is to argue that a sparse reachability preserver of this input instance exists.
First off, for each strongly connected component of $G$, we can preserve reachability using a linear number of edges using an in- and out- reachability tree on this component.
This component may then be contracted into a single super-node, since we already have reachability among all pairs in the component.
Thus, by spending $O(n)$ edges in total, we can assume without loss of generality that $G$ is acyclic.
We make this assumption going forward.

Let $H = (V, E_H)$ be a reachability preserver of $G, P$ with the minimum possible number of edges $|E_H|$ (we will not yet worry about how to construct $H$ efficiently; we will just try to bound $|E_H|$).
\begin{definition} [Requirement and Collision]
For each demand pair $(s, t) \in P$, we say that $(s, t)$ \emph{requires} an edge $e \in E_H$ if every $s \leadsto t$ path in $H$ includes the edge $e$.
We will say that two demand pairs $p_1, p_2 \in P$ \emph{collide} on edges $e_1, e_2$ if:
\begin{itemize}
\item $p_1$ requires $e_1$,
\item $p_2$ requires $e_2$, and
\item $e_1, e_2$ are distinct, but have the same endpoint node.  (E.g., they have the form $e_1 = (u_1, v), e_2 = (u_2, v)$ for some node $v$.)
\end{itemize}
\end{definition}
We will next prove an intermediate claim on the structure of demand pair collisions.
In the following claim and throughout the paper, given a simple path $\pi(s, t)$ that contains two nodes $u, v$ in that order, we will use the notation $\pi(s, t)[u \leadsto v]$ to refer to the contiguous subpath of $\pi(s, t)$ starting at $u$ and ending at $v$.

%

\begin{claim} \label{clm:notriangle}
For any demand pair $p \in P$ and source node $s \in S$, there is at most one pair of edges $\{e_p, e_s\}$ with the property that there exists a demand pair $(s, t) \in P$ that uses $s$ as its start node and where $p, (s, t)$ collide on $\{e_p, e_s\}$.
\end{claim}
We note that Claim \ref{clm:notriangle} allows for the possibility that two demand pairs $(s, t), (s, t') \in P$ that both collide with $p$, as long as they do so via the same edge pair $\{e_p, e_s\}$.

\begin{proof} [Proof of Claim \ref{clm:notriangle}]
Suppose otherwise, towards a contradiction.
This means there are distinct edge pairs
$$\{e_p = (a, v), e_s = (b, v)\}, \{e'_p = (a', v'), e'_s = (b', v')\}$$
and demand pairs $(s, t), (s, t') \in P$ such that $p$ collides with $(s, t), (s, t')$ on these respective edge pairs.
Note again that $v, v'$ are distinct.

Let $\pi(p), \pi(s, t), \pi(s, t')$ be any fixed paths in $H$ for these demand pairs, and assume without loss of generality that $v$ precedes $v'$ along $\pi(p)$.
We can now generate an $s \leadsto t'$ path by concatenating subpaths of these three paths, as follows:
$$\pi(s, t)[s \leadsto v] \circ \pi(p)[v \leadsto v'] \circ \pi(s, t')[v' \leadsto t']$$
(see Figure \ref{fig:notriangle}).
Since $v \ne v'$, and since $p$ requires the edge $(a', v')$, we have $a' \in \pi(p)[v \leadsto v']$.
Thus, this concatenated path enters the node $v'$ along the edge $(a', v')$.
Since $a' \ne b'$, this means the concatenated path is an $s \leadsto t'$ path that does \emph{not} use the edge $(b', v')$.
This contradicts the hypothesis that the demand pair $(s, t')$ requires the edge $(b', v')$, completing the proof.
\end{proof}

\begin{figure}[ht]
\begin{center}
\begin{tikzpicture}
\draw [fill=black] (0, 0) circle [radius=0.15cm];
\node at (0, -0.4) {$s$};

\draw [fill=black] (-2, 2) circle [radius=0.15cm];
\node at (-1.65, 2.2) {$v$};

\draw [fill=black] (2, 2) circle [radius=0.15cm];
\node at (2.35, 2.3) {$v'$};

\draw [fill=black] (-2, 4) circle [radius=0.15cm];
\node at (-2, 4.4) {$t$};

\draw [fill=black] (2, 4) circle [radius=0.15cm];
\node at (2, 4.4) {$t'$};

\draw [->] (-3.5, 2) -- (3.5, 2);
\node at (4, 2) {$\pi(p)$};

\draw [fill=black] (-2.5, 2) circle [radius=0.15cm];
\node at (-2.8, 1.7) {$a$};

\draw [fill=black] (1.5, 2) circle [radius=0.15cm];
\node at (1.2, 1.7) {$a'$};

\draw [fill=black] (-2, 1.5) circle [radius=0.15cm];
\node at (-2, 1.1) {$b$};

\draw [fill=black] (2, 1.5) circle [radius=0.15cm];
\node at (2, 1.1) {$b'$};

\draw [->] plot coordinates {(0, 0) (-2, 1.5) (-2, 2) (-2, 3.8)};
\draw [->] plot coordinates {(0, 0) (2, 1.5) (2, 2) (2, 3.8)};
\node at (-2.7, 3) {$\pi(s, t)$};
\node at (2.7, 3) {$\pi(s, t')$};

\draw [->, red, line width = 3] plot coordinates {(0, 0) (-2, 1.5) (-2, 2) (2, 2) (2, 4)};

\end{tikzpicture}
\end{center}
\caption{\label{fig:notriangle} In the proof of Claim \ref{clm:notriangle}, we argue that in the pictured setup, the thick red path is an $s \leadsto t'$ path that avoids the edge $(b', v')$, contradicting that the edge $(b', v')$ is required by the demand pair $(s, t')$.}
\end{figure}

We now continue with our counting argument.
Let $d$ be a parameter of the argument, which is a positive integer.
We say that an edge $(u, v)$ is \emph{light} if $\deg_H(v) \le d$, or it is \emph{heavy} if $\deg_H(v) > d$.
Unioning over the $n$ nodes in $H$, there are $\le nd$ light edges in total.

To count the heavy edges, let $p \in P$ be an arbitrary demand pair, and suppose $p$ requires $h$ total heavy edges, where these edges enter nodes $\{v_1, \dots, v_h\}$.
Let $A$ be the set of all edges entering any of these nodes $\{v_1, \dots, v_h\}$ besides the edges required by $p$; we thus have $|A| \ge hd$.
Moreover, since $H$ is a reachability preserver of minimal size, none of the edges in $A$ may be removed from $H$ without disconnecting at least one demand pair.
So every edge $e \in A$ is required by a demand pair $q \ne p$, and $q, p$ collide via the edge $e$ (together with the appropriate edge in $p$).

By Claim \ref{clm:notriangle}, for each edge $s \in S$, there is at most one edge $e \in A$ for which there is a demand pair $(s, t) \in P$ that collides with $p$ via the edge $e$.
Thus we have
$$hd \le |A| \le |S|,$$
and so the number of heavy edges required by $p$ is $\le |S|/d$.
We may thus upper bound the total number of heavy edges in $H$ as
$$\sum \limits_{p \in P} \left|\{e \in E \ \mid \ e \text{ is heavy and required by } p \} \right| \le \frac{|P||S|}{d}.$$
Combined with the light edges, we have
$$|E_H| \le nd + \frac{|P||S|}{d}.$$
To complete the proof, we set
$$d = \left\lceil \left( \frac{|P||S|}{n} \right)^{1/2} \right\rceil,$$
which gives a bound of $|E_H| = O(\sqrt{n|P||S|})$ whenever the quantity $|P||S|/n \ge 1$, or (since we take a ceiling in the definition of $d$) a bound of $O(n)$ whenever $|P||S|/n < 1$, for a total bound of
$$|E(H)| = O\left( n + \sqrt{n|P||S|} \right).$$

\subsection{Proof of Theorem \ref{thm:intropairwise}}

Our goal is now to provide an upper bound for a reachability preserver of an $n$-node graph $G$ and an arbitrary set of demand pairs $P$ (not necessarily source-restricted).\footnote{The original version of this paper proved this theorem directly, as a consequence of Claim \ref{clm:notriangle}.
The following is a simplified version of the proof, derived from followup work by the second author \cite{Bodwin20}.}
For each demand pair $(s, t) \in P$, fix an arbitrary $s \leadsto t$ path $\pi(s, t)$ in $G$.
Let $\ell$ be a parameter, and let $R$ be a random sample of nodes, obtained by including each node independently with probability $\ell^{-1}$.
We say that a demand pair $(s, t)$ is \emph{hit} if we have sampled a node $r \in R$ in $\pi(s, t)$, or it is \emph{missed} otherwise.

\begin{itemize}
\item To handle the missed demand pairs $(s, t)$, we simply add all the edges of $\pi(s, t)$ to the reachability preserver.
Our goal is then to argue that, for each demand pair $(s, t)$, the expected number of edges it contributes in this case is $O(\ell)$.
This part of the argument is standard in the area (it follows e.g.\ from Chernoff bounds), and can be calculated as follows.
A path $\pi(s, t)$ is missed iff all $|\pi(s, t)|$ of its nodes are unsampled, and so
\begin{align*}
\Pr[(s, t) \text{ missed}] &= \left( 1 - \frac{1}{\ell} \right)^{|\pi(s, t)|}\\
&= \left(\left( 1 - \frac{1}{\ell}\right)^{\ell}\right)^{|\pi(s, t)|/\ell}\\
&= c^{|\pi(s, t)|/\ell} \tag*{for some $0 < c < 1$.}
\end{align*}

Thus, the expected number of edges contributed by a demand pair $(s, t)$ due to the event that it is missed is either $O(\ell)$ in the case where $|\pi(s, t)| = O(\ell)$, or it is exponentially decaying in the quantity $|\pi(s, t)|/\ell$ otherwise.
That is, we may compute:
\begin{align*}
\mathbb{E}[|\pi(s, t)| \cdot \mathbb{I}[(s, t) \text{ missed}]] &\le O(\ell) + |\pi(s, t)| \cdot c^{|\pi(s, t)|/\ell}\\
&= O(\ell) + O(\ell) \cdot \left(\frac{|\pi(s, t)|}{\ell}\right) \cdot c^{|\pi(s, t)| / \ell}\\
&= O(\ell) + O(\ell) \cdot O(1) \tag*{since $c<1$}\\
&= O(\ell).
\end{align*}
Finally, unioning over all demand pairs, we get $O(|P| \ell)$ edges in expectation from missed demand pairs.

\item To handle the hit demand pairs $(s, t)$, we first split them into two demand pairs $(s, r), (r, t)$ where $r \in R$ (clearly, if we preserve $s \leadsto r$ and $r \leadsto t$ reachability, then we also implicitly preserve $s \leadsto t$ reachability).
Let $P_1$ denote the set of the first of two demand pairs arising from a hit demand pair (e.g., $(s, r) \in P_1$), and let $P_2$ denote the set of the second of these two demand pairs (e.g., $(r, t) \in P_2$).
We first measure the cost of preserving $P_2$.
We have the structure $P_2 \subseteq R \times V$, so applying Theorem \ref{thm:main} we have a reachability preserver for $P_2$ at cost
$$O\left(n + \sqrt{n\left|P_2\right||R|}\right).$$
To bound the expected size of this reachability preserver, we compute 
\begin{align*}
\mathbb{E}\left[ n + \sqrt{n |P_2| |R|} \right] &\le \mathbb{E}\left[ n + \sqrt{n |P| |R|} \right]\\
&=  n + (n |P|)^{1/2} \cdot \mathbb{E}\left[|R|^{1/2}\right]\\
&\le  n + (n |P|)^{1/2} \cdot \mathbb{E}\left[|R|\right]^{1/2} \tag*{Jensen's Inequality}\\
&= n + (n |P|)^{1/2} \cdot \left( n\ell^{-1} \right)^{1/2}\\
&= n + n \sqrt{\frac{|P|}{\ell}}.
\end{align*}
By symmetric logic, the expected cost to preserve demand pairs in $P_1$ is the same.\footnote{This requires an application of Theorem \ref{thm:main} with demand pairs of the form $P \subseteq V \times S$, rather than $P \subseteq S \times V$ as in the statement of Theorem \ref{thm:main}.  However, these settings are clearly symmetric and so the same reachability preserver bounds apply, e.g., by reversing the edges of the input graph.}
\end{itemize}

So the total expected number of edges in this reachability preserver construction is
$$O\left(n + |P| \ell + n \sqrt{\frac{|P|}{\ell}} \right).$$
We now balance parameters by setting $\ell :=  n^{2/3} |P|^{-1/3}$, which gives a total expected cost of
\begin{align*}
O\left(n + |P| \cdot n^{2/3} |P|^{-1/3} + n \sqrt{\frac{|P|}{n^{2/3} |P|^{-1/3}}}\right) &= O\left(n + n^{2/3} |P|^{2/3} + n^{2/3} |P|^{2/3}\right)\\
&= O\left(n + n^{2/3} |P|^{2/3} \right)
\end{align*}
for our preserver, as claimed.

\subsection{Constructing Reachability Preservers \label{sec:fastalg}}

Here we observe that one can construct asymptotically existentially optimal reachability preservers in $O(|E| \cdot |S| \log n)$ time.
By ``existentially optimal,'' we mean the following.
Let $\rpp(n, p, \sigma)$ be the smallest integer such that, for any $n$-node directed graph $G = (V, E)$, set of $|S| = \sigma$ source nodes, and set of $|P| = p$ demand pairs $P \subseteq S \times V$, there is a reachability preserver on $\le \rpp(n, p, \sigma)$ edges.
So for example, Theorem \ref{thm:main} can be equivalently phrased as the statement
$$\rpp(n, p, \sigma) = O\left(n + \sqrt{np\sigma}\right).$$

An existentially optimal algorithm is one that produces reachability preservers on $O(\rpp(n, p, \sigma))$ edges, whatever this function value may be.
So the number of edges is \emph{at most} $O(n + \sqrt{np\sigma})$, but it could be substantially smaller if the upper bound of Theorem \ref{thm:main} winds up being significantly suboptimal.
That is:

\begin{theorem} \label{thm:reach-construct}
There is a randomized algorithm that, given an $n$-node directed graph $G = (V, E)$, source nodes $S$, and demand pairs $P \subseteq S \times V$ on input, constructs a reachability preserver on $O(\rpp(n, |P|, |S|))$ edges (always), and terminates in time $O\left(|E| |S| \log n)\right)$ with high probability.
\end{theorem}

We comment that the following approach also works to produce reachability preservers of existentially optimal size with respect to a parametrization based only on $n$ and $p$ (but not $\sigma$), like in Theorem \ref{thm:intropairwise}.
That is: if $\rpp(n, p)$ denotes the least integer such that every $n$-node graph and set of $p$ demand pairs has a reachability preserver on $\le \rpp(n, p)$ edges, the following algorithm always produces a reachability preserver on $O(\rpp(n, p))$ edges.
However, the runtime of the algorithm will still depend on $\sigma$, the number of source nodes used by the demand pairs, even if the size bounds do not.


\paragraph{Step 1: Convert $G$ to a DAG.}

Like we did in our previous extremal proof, it will be helpful to first convert $G$ to a DAG.
To accomplish this, we run an algorithm to detect the strongly connected components of $G$ in $O(|E|)$ time (e.g. \cite{Tarjan72}), and then we spend $O(n)$ edges to add in- and out- reachability trees that preserve all-pairs reachability within each strongly connected component.
We may then contract each strongly connected component into a single vertex, and build a reachability preserver of the remaining graph, which is acyclic.
At the end of the algorithm, we would then un-contract each strongly connected component, and replace each reachability preserver edge $(u, v)$ with any single edge going from the component represented by the node $u$ to the component represented by the node $v$.

One can see that $\rpp(n, p, \sigma) = \Omega(n)$, since there are inputs where $\Omega(n)$ edges are needed for a reachability preserver.
For example, if the input graph is a path and $|P|$ contains the pair of nodes at either extreme end of the path, then any reachability preserver must keep all $n-1$ edges.
Thus, the $O(n)$ cost of preserving reachability within strongly connected components is $O(\rpp(n, |P|, |S|))$, so it can be ignored.
In the following, we assume that $G$ is a DAG.

\paragraph{Step 2: Building the Reachability Preserver.}

We will construct our reachability preserver of $G, P$ decrementally; that is, we initially set $H \gets G$ and we will iteratively delete edges of $H$.
We use as a subroutine an algorithm of Italiano \cite{Italiano88}.

\begin{theorem} [\cite{Italiano88}] \label{thm:dec-reach}
There is a deterministic algorithm that, given a DAG $G = (V, E)$ and a source node $s \in V$, explicitly maintains the set of nodes reachable from $s$ over a sequence of edge deletions.
The total amount of time needed to maintain this list over all edge deletions is $O(|E|)$.
\end{theorem}

For the sake of building intuition, we first consider Algorithm \ref{alg:warmup}, which is perhaps the most natural method for sparsifying $H$ while preserving reachability among pairs in $P$ (this is \textbf{not} the final algorithm that we use).
\begin{algorithm} 
Initialize $H \gets G$\;

\ForEach{$s \in S$}{
Initialize a data structure $D_s$ as in Theorem \ref{thm:dec-reach}\;
}

\While{$H$ has $\ge 2\rpp(n, |P|, |S|)$ edges remaining}{
Choose an edge $e$ still in $H$ uniformly at random\;
Delete $e$ from $H$ and update each data structure $D_s$ accordingly\;
\If{any demand pair $(s, t) \in P$ is no longer reachable in $H$}{
Add $e$ back to $H$ and undo the changes made to all data structures $D_s$\;
}
}
\Return{$H$}\;
\caption{\label{alg:warmup} Warmup Algorithm for Constructing Reachability Preservers}
\end{algorithm}

It is immediate that Algorithm \ref{alg:warmup} is correct, in the sense that it eventually returns a reachability preserver with $O(\rpp(n, |P|, |S|))$ edges.
The trouble is that its runtime guarantees are not very good.
Let us say that an iteration of the main while loop is \emph{successful} if the selected edge $e$ does not affect reachability among demand pairs, and so $e$ remains deleted.
The successful iterations are not a problem: using Theorem \ref{thm:dec-reach}, they take $O(|E||S|)$ time in total.

The problem is that we expect to have $\Omega(\rpp(n, |P|, |S|))$ unsuccessful iterations, and \emph{each} unsuccessful iteration might take $\Omega(|E||S|)$ time.
That is, because the \emph{worst case update time per deletion} in Theorem \ref{thm:dec-reach} is $O(|E|)$ for each data structure, it is  conceivable that we will pay $\Omega(|E||S|)$ work for a single unsuccessful deletion, and then we have to unwind all of this work and so we are not able to amortize it over the runtime of the entire algorithm.

This failed attempt at an algorithm gives us the intuition that we are willing to perform some extra work in order to avoid unsuccessful iterations.
The key insight here is that \emph{parallelization} is useful.
In particular, our final algorithm (Algorithm \ref{alg:reach}) works by maintaining $\Theta(\log n)$ different ``universes'' at a time, and it runs each loop through Algorithm \ref{alg:warmup} \emph{simultaneously in all universes}.
This makes it extremely likely that the iteration will be successful in at least one universe, which is all we need to get our runtime bounds.

\begin{algorithm}[ht] 
Initialize $H \gets G$\;

\ForEach{each source $s \in S$}{
Initialize $c \log n$ identical data structures $D^i_s$ from Theorem \ref{thm:dec-reach} ($i \in [c \log n]$)\;
}

\While{$H$ has $\ge 2 \rpp(n, |P|, |S|)$ edges remaining}{
Let $R$ be a uniform random sample of $c \log n$ edges still in $H$\;
\ForEach{edge $r_i \in R$ \textbf{in parallel:}}{
Update the data structures $D_s^i$ with the deletion of $r_i$ for each $s \in S$\;
\If{all pairs in $P$ are still reachable after $r_i$ is deleted}{
Delete $r_i$ from $H$\;
Halt the parallel process for each other edge $r \in R$\;
Undo the updates to all other data structures $\{D_s^{j \ne i}\}$ in this iteration\;
Update all other data structures $D_s^j$ by deleting $r_i$\;
}
}
}
\Return{$H$}\;
\caption{\label{alg:reach} Fast Construction of Reachability Preservers}
\end{algorithm}

Just like before, it is easy to see that this algorithm produces a correct reachability preserver on $O(\rpp(n, |P|, |S|))$ edges: the edge bound follows from the stopping condition on the main while loop, and correctness follows from the fact that we only delete an edge $r_i$ from $H$ when doing so does not destroy reachability among any demand pairs.

It will be a crucial detail in our analysis that the inner for loop is executed \emph{in parallel} over the sampled edges $r_i \in R$; that is, we spend one computational step to progress the updates to data structures $\{D_s^i\}$ before moving on to the next edge $r_{i+1} \in R$ and spending one computational step to progress data structures $\{D_s^{i+1}\}$, and so on.
In an iteration of the main while loop, let us say that a particular sampled edge $r_i \in R$ is \emph{successful} if all pairs in $P$ are still reachable in $H \setminus \{r_i\}$ (there can be many successful edges in each iteration).
In particular, the parallelization means that if the data structure updates for a successful edge $r_i$ require $t$ steps, then our algorithm spends only $O(t \log n)$ steps on this iteration in total, since it progresses all $c \log n$ universes by $t$ steps each before halting due to the completed updates for $r_i$.

The constant $c$ in the algorithm can be any constant greater than $2$, whose value governs the probability that the algorithm halts within our claimed runtime.
In particular:

\begin{claim}
If $c > 2$, then with high probability, in every round of the main while loop at least one sampled edge $r \in R$ is successful.
\end{claim}
\begin{proof}
Since $H$ has $\ge 2 \rpp(n, |P|, |S|)$ edges, at least half of the edges in $H$ can be deleted without destroying reachability for any demand pairs.
So for any single sampled edge $r \in R$, we have
$\Pr\left[ r \text{ successful} \right] \ge 1/2$.
It follows that
$$\Pr\left[ \text{there exists successful } r \in R \right] \ge 1 - 1/2^{c \log n} = 1 - 1/n^c.$$
By an intersection bound, we then have
$$\Pr\left[ \text{in all of the first } n^2 \text{ iterations, there exists successful } r \in R \right] \ge 1 - 1/n^{c-2}.$$
Since $H$ initially has $\le n^2$ edges, and an edge is deleted in each iteration of the while loop for which at least one sampled edge $r \in R$ is successful, this condition suffices for every iteration to be successful.
\end{proof}

So in the following, we will assume that at least one sampled edge $r \in R$ is successful in each round, and we will say that the \emph{first successful edge} $r \in R$ is the one for which the data structure updates terminate the fastest.
Thus, among all successful edges in $R$, the \emph{first} successful edge is the only one that ultimately gets deleted from $H$ in this iteration.
Let $r^j$ be the first successful edge in iteration $j$, and let $t^{(j)}$ be the time required to update the data structures associated to $r^j$ in this iteration.
Thus, as discussed previously, the $j^{th}$ iteration of the algorithm runs in $O(t^{(j)} \log n)$ time, and the total runtime of the algorithm is
$$\sum \limits_j O\left(t^{(j)} \log n\right) = O\left(\log n \right) \cdot \sum \limits_j t^{(j)}.$$
To bound this inner summation, it is helpful to briefly imagine that we only have $|S|$ data structures $\{D_s\}$, and we update these data structures by deleting the first successful edges $\{r^1, r^2, \dots\}$ in sequence.
The updates associated to the deletion of $r^j$ require exactly $t^j$ time.
Moreover, using Theorem \ref{thm:dec-reach}, the total time required is $O(|E|)$ per data structure, for a total of $O(|S||E|)$ across all data structures.
We thus have
$$\sum \limits_j t^{(j)} = O(|S||E|).$$
So the total runtime of Algorithm \ref{alg:reach} is $O(|S||E| \log n)$, as claimed.

\section{Applications to Directed Steiner Network}
\label{sec:approx}

In this section we obtain a new approximation algorithm for Unweighted Directed Steiner Network (UDSN).
Our algorithm builds on prior work by identifying an ingredient that is common to most previous approaches, and we show how it can benefit from our extremal results on reachability preservers.

Let us fist briefly review the state-of-the-art algorithm of Chlamatac et al.~\cite{CDKL17} for UDSN.
This algorithm really contains two different algorithms; one guarantees a factor $k = n^{3/5 + \eps}$ approximation in the setting where $OPT \le n^{4/5}$, and the other achieves the same approximation factor in the setting where $OPT \ge n^{4/5}$.
(Throughout this exposition, $\eps$ can be any positive constant, which trades off with the exponent in the polynomial-time algorithm.)
In a little more detail, the first algorithm in \cite{CDKL17} yields the following result:

\begin{lemma}[follows from \cite{CDKL17,BBMRY13}] \label{lem:lowoptdsn}
If a UDSN instance has $OPT \leq O(n^{4/5-\alpha})$ for some $\alpha \ge 0$, then there is a polynomial time algorithm that gives a $k \leq O(n^{3/5-\alpha/3+\eps})$ approximation to $OPT$.
\end{lemma} 

We will use this lemma exactly as stated here, with no changes to the underlying algorithm.
Rather, our improvements apply to the second algorithm, which applies for larger values of $OPT$.
This algorithm is based on a dichotomy between \emph{thick} and \emph{thin} demand pairs.
We pick a threshold $k$, which is a parameter that we will select later, and say that a demand pair $(s,t) \in P$ is $k$-\emph{thick} if the set of all $s \leadsto t$ paths in $G$ contains at least $k$ distinct nodes, and otherwise the demand pair $(s, t)$ is $k$-\emph{thin}.
The thin pairs are again handled using a subroutine from prior work.
The proof of the following lemma relies on an LP relaxation of the problem, and then a clever randomized rounding strategy to pick an approximate integral solution. 
\begin{lemma}[follows from \cite{BBMRY13}, used in \cite{CDKL17}] \label{lem:thinpairs}
For all $k \geq 1$, given an instance of UDSN we can find a subgraph on $\Oish(k \cdot OPT)$ edges, in which all $k$-thin pairs are connected with high probability.
\end{lemma}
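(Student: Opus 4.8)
The plan is to round the natural fractional relaxation of UDSN, after scaling the fractional edge weights up by a $\tilde O(k)$ factor. Concretely, I would write the multicommodity flow LP that has a variable $x_e \in [0,1]$ for each edge $e$ and, for each pair $(s,t) \in P$, auxiliary flow variables encoding a unit $s \leadsto t$ flow that respects the capacity $x_e$ on each edge $e$; the objective is $\min \sum_e x_e$. This LP has polynomial size, so it is solvable in polynomial time, and its optimum is at most $OPT$ because the indicator vector of an optimal integral UDSN solution is feasible (it supports a unit flow for every pair it connects). Let $x^*$ denote an optimal fractional solution.

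For the rounding step I would fix $\gamma = \Theta(\log n)$, set $y_e := \min\{1,\ \gamma k\, x^*_e\}$, and form $H$ by including each edge $e$ independently with probability $y_e$. Then $\mathbb{E}[|E(H)|] = \sum_e y_e \le \gamma k \sum_e x^*_e \le \gamma k \cdot OPT = \tilde O(k\cdot OPT)$, and because the edges are included independently a standard Chernoff bound gives $|E(H)| = \tilde O(k\cdot OPT)$ with high probability (absorbing an additive $O(\log n)$ term using $OPT \ge 1$). If one prefers to avoid concentration one can just use Markov's inequality and $O(1)$ expected reruns, since the connectivity guarantee below holds regardless.

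The heart of the proof is showing that each $k$-thin pair is connected in $H$ with high probability, and this is where I expect the main work. Fix a $k$-thin pair $(s,t)$ and let $U_{st}$ be the set of vertices lying on some $s \leadsto t$ path in $G$; thinness says $|U_{st}| < k$. The key structural observation is that, since $H \subseteq G$, every $s \leadsto t$ path of $H$ is also an $s \leadsto t$ path of $G$ and therefore lies entirely inside $H[U_{st}]$; likewise every $s$-$t$ edge cut of $G[U_{st}]$ is an $s$-$t$ edge cut of $G$ (every $s \leadsto t$ path of $G$ lives in $G[U_{st}]$). Hence $s \not\leadsto t$ in $H$ precisely when the set $R$ of vertices of $U_{st}$ reachable from $s$ inside $H[U_{st}]$ omits $t$, in which case the cut $\delta^+(R)$ of $G[U_{st}]$ is missing entirely from $H$. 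There are fewer than $2^k$ candidate sets $R$ (subsets of $U_{st}$ containing $s$ but not $t$). For any fixed such $R$, the cut $C := \delta^+(R)$ is an $s$-$t$ cut of $G$, so by the easy (max-flow $\le$ min-cut) direction applied to the unit $(s,t)$-flow carried by $x^*$ we have $\sum_{e \in C} x^*_e \ge 1$; thus either some $e \in C$ has $y_e = 1$ (and then $C \cap E(H) \ne \emptyset$ deterministically) or $\sum_{e \in C} y_e = \gamma k \sum_{e\in C} x^*_e \ge \gamma k$, and in either case $\Pr[C \cap E(H) = \emptyset] \le \prod_{e\in C}(1-y_e) \le e^{-\gamma k}$. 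A union bound over the $< 2^k$ choices of $R$ yields $\Pr[s \not\leadsto t \text{ in } H] \le 2^k e^{-\gamma k} \le e^{-\Omega(\gamma)} = n^{-\Omega(1)}$, using $k \ge 1$ and $\gamma = \Theta(\log n)$ with a sufficiently large constant. A final union bound over the at most $n^2$ thin pairs completes the argument.

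The crux — and the only genuinely delicate point — is realizing that the union bound over ``failure configurations'' of a thin pair ranges over only $2^{|U_{st}|} < 2^k$ vertex subsets, rather than over a number exponential in the edge count; this rests entirely on the observation that $s \leadsto t$ in $H$ is decided within the induced subgraph $H[U_{st}]$, which in turn is where the definition of $k$-thinness is used. Given that, the remaining content is a one-line parameter balance: scaling by $\gamma k$ with $\gamma = \Theta(\log n)$ simultaneously keeps the expected cost at $\tilde O(k \cdot OPT)$ and drives the per-pair failure probability $2^k e^{-\gamma k}$ below any desired inverse polynomial, uniformly over all $k \ge 1$.
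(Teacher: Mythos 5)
Your proof is correct and follows the same LP-relaxation-plus-randomized-rounding approach that the paper attributes to Berman et al.~\cite{berman} (the paper gives only the citation and a one-sentence sketch, not a standalone proof). The structural observation you isolate as the crux---that $s \leadsto t$ in $H$ is decided within $H[U_{st}]$, so a union bound over the fewer than $2^k$ vertex cuts of $U_{st}$ suffices---is exactly the ``anti-spanner'' counting argument from that work, and your scaling by $\gamma k$ with $\gamma = \Theta(\log n)$ is the standard parameter choice there.
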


So the thin demand pairs can be handled within approximation ratio $k$.
We now turn to the thick pairs.
All previous works for DSN and related problems \cite{FKN12,BBMRY13,CDKL17,DZ16}, where this thin/thick pairs framework was used, handled the thick pairs using a naive strategy: they sample a hitting set $S$ of $\Oish(n/k)$ nodes, arguing that $S$ contains a node along an $s \leadsto t$ path for every thick pair $(s, t)$.
Then they try to connect every terminal in the pair set $P$ to every node in the hitting set $S$.
For instance, Chlamatac et al.\ take BFS trees in and out of each node in the hitting set. In their algorithm, $k$ is set to $n^{3/5}$ and so their hitting set has size $\Oish(n^{2/5})$, which makes the cost of this stage $\Oish(n^{7/5})$.

But do we really need $O(n^{7/5})$ edges in order to connect all the terminals to the hitting set?
This is where our work comes in: Theorem \ref{thm:main} exactly implies that we can do much better.
For example, say that OPT is $n^{4/5}$ and that we have $n^{4/5}$ terminals that we want to connect to $n^{2/5}$ other nodes. 
Theorem~\ref{thm:main} says that $O(n^{13/10})$ edges suffice, improving on the naive bound of $n^{14/10}$.



More concretely, let $S$ be a hitting set of size
$$|S| = \Oish\left(n/k\right)$$
(that is, for every $k$-thick pair $(s, t)$, there exists an $s \leadsto t$ path that contains a node $x \in S$).
Let $T$ be the set of all terminals participating in $k$-thick pairs in $P$.
Notice that $|T|\leq OPT$, since any solution must keep at least one edge adjacent to each terminal in $P$.
Now our goal is to connect all nodes in $T$ to and from all nodes in $S$; that is, we consider the pair sets $P_1 := S \times T, P_2 := T \times S$, and we build reachability preservers in $G$ for $P_1, P_2$.
Theorem~\ref{thm:main} implies that the total number of edges needed for these reachability preservers is
\begin{align*}
&O\left(n + \sqrt{n |S|^2 |T|}\right)\\
=& \Oish \left(n + \sqrt{n \cdot \left(\frac{n}{k} \right)^2 \cdot OPT }\right) \\
=& \Oish \left(n + \frac{n^{3/2}}{k} \cdot \sqrt{OPT}\right).
\end{align*}

Let's now assume that $OPT \ge \Omega(n^{4/5 - \alpha})$, since this is the remaining case from Lemma \ref{lem:lowoptdsn}.
The approximation ratio obtained is thus
\begin{align*}
=& \Oish \left( \frac{n + \frac{n^{3/2}}{k} \cdot \sqrt{OPT}}{OPT} \right)\\
=& \Oish \left( \frac{n}{OPT} + \frac{n^{3/2}}{k \sqrt{OPT}} \right)\\
=& \Oish \left( n^{1/5 + \alpha} + \frac{n^{11/10 + \alpha/2}}{k} \right).
\end{align*}

We would like the thick and thin pairs to incur the same approximation ratio.
A parameter balance gives that this occurs when we set
$$k := n^{11/20 + \alpha/4},$$
in which case the approximation ratio for thick pairs becomes
$$\Oish \left( n^{1/5 + \alpha} + n^{11/20 + \alpha/4} \right) = \Oish \left( n^{1/5 + \alpha} + k \right).$$
In the range $0 \le \alpha \le 3/5$, the latter term dominates and the approximation ratio is $\Oish(k)$.
Combined with Lemma \ref{lem:thinpairs}, this gives:

\begin{lemma}[new]
If a UDSN instance has $OPT \geq \Omega(n^{4/5-\alpha})$ for some $0 \le \alpha \le 3/5$, then there is a polynomial time algorithm that gives a $k \leq \Oish(n^{11/20 + \alpha/4})$ approximation to OPT.
\end{lemma}

We now have two UDSN algorithms which have approximation ratios of
$$n^{3/5 - \alpha/3 + \eps}, n^{11/20 + \alpha/4 + \eps}$$
respectively.
We can thus run both algorithms on a given input and take the sparser of the two solutions.
The two approximation ratios are equal when $\alpha = 3/35$, and both are $n^{4/7 + \eps}$; since they depend oppositely on $\alpha$, for any other choice of $\alpha$ one algorithm or the other beats the approximation ratio of $n^{4/7 + \eps}$.
This gives:

\begin{theorem}
For any fixed constant $\eps>0$, there is a polynomial time algorithm for UDSN with approximation factor $O(n^{4/7 + \eps})$.
\end{theorem}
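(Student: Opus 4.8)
The plan is to assemble the final theorem by combining the two regimes established in the preceding lemmas, choosing the parameters that make their approximation guarantees coincide. First I would fix the constant $b = 1/45$ and set $\alpha = 3b = 3/45 = 1/15$, since this is precisely the threshold value allowed by the ``new'' lemma (which requires $0 \le \alpha \le 3/45$) and it is the largest such value, hence the one that pushes the exponent down the most. With $\alpha = 3/45$, the target approximation factor becomes $k = O(n^{3/5 - \alpha/3 + \eps}) = O(n^{3/5 - 1/45 + \eps})$; a routine numerical check gives $3/5 - 1/45 = 27/45 - 1/45 = 26/45 = 0.5\overline{7} < 0.5777\ldots$, matching the claimed bound.

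Next I would verify that the two lemmas together cover all instances. The ``follows from \cite{chlamtac,berman}'' lemma handles every instance with $OPT \le O(n^{4/5 - \alpha})$, giving approximation $O(n^{3/5 - \alpha/3 + \eps})$. The ``new'' lemma handles every instance with $OPT \ge \Omega(n^{4/5 - \alpha})$ (using the admissible value $\alpha = 3/45$), also giving approximation $O(n^{3/5 - \alpha/3 + \eps})$. Since these two ranges of $OPT$ overlap (or at worst abut) and exhaust all possibilities, running both algorithms and returning the sparser of the two outputs yields, in every case, a feasible reachability-preserving subgraph of size $O(n^{3/5 - 1/45 + \eps}) \cdot OPT$. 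Both subroutines run in polynomial time by the statements of the respective lemmas, so the combined algorithm is polynomial.

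The one subtlety I would spell out carefully is feasibility of the combined solution: we do not know $OPT$ in advance, so we cannot simply pick which lemma to invoke. Instead I would run the large-$OPT$ algorithm and the small-$OPT$ algorithm unconditionally and argue that whichever regime the true $OPT$ falls into, the corresponding algorithm returns a feasible solution within the stated factor; the other algorithm may return garbage or an infeasible partial solution, but we discard it by keeping only the sparsest \emph{feasible} output. (If one insists on a cleaner statement, one can binary-search over a guessed value of $OPT$ and scale the threshold $k$ accordingly, incurring only a logarithmic overhead absorbed into the $n^{\eps}$ slack.) I expect the main obstacle — really the only nontrivial point — to be this bookkeeping about which algorithm to trust for a given instance, together with confirming that the parameter choice $\alpha = 3/45$ simultaneously satisfies the admissibility constraint of the new lemma and makes the exponents $1/2 + 5\alpha/6$ and $3/5 - \alpha/3$ line up so that the reachability-preserver cost is genuinely at most $k \cdot OPT$; everything else is substitution into the two cited lemmas.
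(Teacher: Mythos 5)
Your proposal is correct and matches the paper's proof essentially step for step: set $\alpha = 3/45$, apply the small-$OPT$ lemma from \cite{chlamtac,berman} and the new large-$OPT$ lemma, observe that the two ranges of $OPT$ cover all instances, run both algorithms, and return the sparser (feasible) solution. The only quibble is cosmetic: $26/45 = 0.5\overline{7}$ is exactly $0.5777\ldots$, not strictly less than it, but this has no bearing on the argument.
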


\section{Reachability Preserver Lower Bounds \label{sec:extremal}}
In this section we supply extremal lower bounds for reachability preservers: that is, we construct particular input graphs and sets of demand pairs for which no sparse reachability preserver exists.
These provide limits to the general upper bounds that can be proved, along the lines of Theorems \ref{thm:main} and \ref{thm:intropairwise}.
In order to phrase these theorems, we will reuse notation from Section \ref{sec:fastalg}: let $\rpp(n, p)$ denote the smallest integer such that every $n$-node graph and set of $p$ demand pairs has a reachability preserver on $\le \rpp(n, p)$ edges, and define $\rpp(n, p, \sigma)$ similarly with the additional constraint that the demand pairs have the form $P \subseteq S \times V$, for a node subset of size $|S| = \sigma$.

\subsection{Pairwise Lower Bounds}

We begin by providing lower bounds against $\rpp(n, p)$.
Our starting point is the following theorem from prior work:

\begin{theorem} [Proved in \cite{CE06}] \label{thm:pairwise-distance-lb}
For any positive integers $d, n, p$, there is an $n$-node undirected unweighted graph $G = (V, E)$ with
$$|E| = \Omega\left(n^{\frac{2d}{d^2 + 1}} p^{\frac{d^2 - d}{d^2 + 1}}\right),$$
and a set of $|P|=p$ demand pairs such that
\begin{itemize}
\item For each pair $(s, t) \in P$ there is a unique shortest $s \leadsto t$ path in $G$,
\item These unique shortest paths are pairwise edge disjoint, and
\item The edge set of $G$ is precisely the union of these paths.
\end{itemize}
\end{theorem}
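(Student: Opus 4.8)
The plan is to recover the Coppersmith--Elkin lattice construction. Fix the integer $d\ge 2$ and two scale parameters $M,r$ to be tuned at the end (we will have $r=o(M)$). Let the vertex set $V$ of $G$ be the integer points of the box $\{0,1,\dots,M\}^d$, so $n:=|V|=\Theta(M^d)$. Let $Q:=\mathrm{conv}\!\big(B(0,r)\cap\mathbb{Z}^d\big)$ be the convex hull of the lattice points within Euclidean distance $r$ of the origin; this is a centrally symmetric lattice polytope with the origin in its interior. Let $W$ consist of one representative from each antipodal pair $\{w,-w\}$ of vertices of $Q$; by the classical estimates on lattice points in convex position (the $d=2$ case being Theorem~\ref{thm:ch-size}) we have $|W|=\Theta\!\big(r^{d(d-1)/(d+1)}\big)$, and every $w\in W$ has $\|w\|_2=\Theta(r)$. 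The edges of $G$ are all pairs $\{x,\,x+w\}$ with $x\in V$, $w\in W$, and $x+w\in V$. Finally, $P$ is the set of endpoint pairs of inclusion-maximal arithmetic runs in generator directions: for each $w\in W$ and each maximal sequence $y,y+w,\dots,y+Lw$ of lattice points lying in the box, put $(y,\,y+Lw)$ into $P$ and let $\pi(y,y+Lw)$ be this run, which is visibly a path in $G$.

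The conceptual heart is bullet~1: $\pi(y,y+Lw)$ is the \emph{unique} shortest $y\leadsto(y+Lw)$ path, because the edge-vectors are in strictly convex position so every detour is strictly longer. Since $w$ is a vertex of $Q$, fix a linear functional $\ell$ whose maximum over $Q$ is attained only at $w$, and put $c:=\ell(w)$, which is positive since $0\in\mathrm{int}\,Q$. Every step of any path in $G$ is $\pm w'$ for some $w'\in W\subseteq Q$, hence lies in the symmetric set $Q$, hence has $\ell$-value at most $c$, with equality only for the step $+w$. Applying $\ell$ to the step decomposition $u^{(1)}+\dots+u^{(k)}=Lw$ of an arbitrary $y\leadsto(y+Lw)$ path gives $kc\ge\sum_i\ell(u^{(i)})=Lc$, so $k\ge L$, and equality forces $u^{(i)}=w$ for all $i$. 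Bullets~2 and~3 are then structural: no two elements of $W$ are parallel (two parallel vertices of the symmetric polytope $Q$ would put one of them, or its antipode, properly inside a segment of $Q$), so each edge $\{x,x+w\}$ lies on exactly one line $x+\mathbb{Z}w$ and hence on the canonical path of exactly one maximal run, while conversely it does lie on that canonical path; thus $E(G)$ is the disjoint union of the $\pi(s,t)$.

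It remains to count and balance. Since $r=o(M)$, boundary effects are negligible and $|E|=\Theta\!\big(M^d|W|\big)=\Theta\!\big(M^d\,r^{d(d-1)/(d+1)}\big)$. For a fixed generator $w$ the maximal runs are indexed by the cosets of $\mathbb{Z}w$ meeting the box; a standard covolume estimate gives $\Theta(M^{d-1}r)$ of them, so $|P|=\Theta\!\big(M^{d-1}r\,|W|\big)=\Theta\!\big(M^{d-1}r^{(d^2+1)/(d+1)}\big)$, while $n=\Theta(M^d)$. Given a target $p$, set $M=\Theta(n^{1/d})$ so that $|V|=\Theta(n)$, and choose $r=\Theta\!\big((p/M^{d-1})^{(d+1)/(d^2+1)}\big)$ so that $|P|=\Theta(p)$; substituting into the expression for $|E|$ and simplifying the exponents yields
$$|E|=\Theta\!\left(n^{2d/(d^2+1)}\,p^{(d^2-d)/(d^2+1)}\right),$$
as claimed. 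One checks that in the regime where this $p$ is realizable we indeed have $r=o(M)$; the remaining (degenerate) ranges of $p$ are handled trivially, and an infinite family is obtained by letting $M$ range over the integers, padding $V$ with isolated vertices to reach every sufficiently large $n$.

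The step I expect to require the most care is the counting: pinning down "$\Theta(M^{d-1}r)$ maximal runs per direction" against boundary effects, verifying that the $r=o(M)$ regime is exactly the one in which the stated exponents emerge, and invoking the correct $d$-dimensional version of the convex-hull vertex count $\Theta\!\big(r^{d(d-1)/(d+1)}\big)$ (whose $d=2$ instance is Theorem~\ref{thm:ch-size}). By contrast the uniqueness argument, though it is where the cleverness of the construction lives, is short once one adopts the supporting-functional viewpoint.
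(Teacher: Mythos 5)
Your reconstruction is correct; the paper does not prove this statement itself but imports it verbatim from \cite{CE06}, and your argument (lattice points in a box, generators taken as vertices of the convex hull of $B(0,r)\cap\mathbb{Z}^d$, uniqueness of the straight runs via a supporting linear functional, and the $M,r$ parameter balance) is precisely the Coppersmith--Elkin construction, using the same convex-hull machinery (Theorem \ref{thm:ch-size} and its $d$-dimensional analogue) that the paper deploys for its own $S\times V$ lower bound. The only points deserving a second look are the ones you already flagged -- the per-direction run count and the degenerate ranges of $p$ outside $n^{(d-1)/d}\lesssim p\lesssim n^{2d/(d+1)}$, where the stated bound is anyway achievable trivially -- so there is no gap.
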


It will be convenient to add one more property to Theorem \ref{thm:pairwise-distance-lb}: that all unique shortest paths have exactly the same length $\ell$.
To enforce this, let us define
$$\ell := \left\lfloor \frac{|E|}{2p} \right\rfloor = \Theta\left(n^{\frac{2d}{d^2 + 1}} p^{\frac{-d-1}{d^2+1}} \right)$$
so $\ell$ is half the average length of a shortest path for one of the demand pairs (rounded down).
Then, for each demand pair $(s, t)$ with unique shortest path $\pi(s, t)$, we partition $\pi(s, t)$ into subpaths of length exactly $\ell$ each, plus a ``remainder'' path at the end which may be shorter:
$$\pi(s, t) = \underbrace{\pi(s=x_0, x_1) \circ \dots \circ \pi(s, x_k)}_{\text{length } \ell} \circ \underbrace{\pi(x_k, t)}_{\text{length in } [0, \ell-1]}.$$
We then replace the demand pair $(s, t)$ with the set of all demand pairs $\{(x_i, x_{i+1})\}$, we discard the final remainder pair $(x_k, t)$, and we remove all edges in $\pi(x_k, t)$ from $G$.
It is now clear that we have unique edge-disjoint shortest paths for our demand pairs, and they all have length exactly $\ell$.
The total number of edges discarded is $\le p \ell \le |E|/2$, which only affects the number of edges in $G$ by a constant factor.
Finally, since the number of edges $|E|$ changes by a constant factor, and the average path length changes by a constant factor (from about $2\ell$ to $\ell$), it follows that the number of demand pairs only changes by a constant factor as well.
All of these constant-factor changes affect only implicit constants in the relevant statistics for Theorem \ref{thm:pairwise-distance-lb}, and may be ignored.

We can now prove our lower bound for reachability preservers:
\begin{theorem} \label{thm:unconditional-pairwise-lb}
For any positive integer $d$, we have $ \rpp(n, p) = \Omega\left(n^{\frac{2}{d+1}} p^{\frac{d-1}{d}} \right).$
\end{theorem}

To prove Theorem \ref{thm:unconditional-pairwise-lb}, we will take a graph from Theorem \ref{thm:pairwise-distance-lb} and convert it to a reachability preserver lower bound by ``layering.''
That is, letting $G = (V, E)$ be an instance from Theorem \ref{thm:pairwise-distance-lb}, modified as above, we define a graph $G^* = (V^*, E^*)$ and demand pairs $P^*$ by the following process:
\begin{itemize}
\item The nodes $V^*$ are formed by $2\ell + 1$ distinct copies of $V$, labeled $\{V_0, \dots, V_{2\ell}\}$.
The nodes in $V_i$ are called the \emph{$i^{th}$ layer} of $G^*$.
For a node $v \in V$, we write $v_i$ to mean the copy of $v$ in the $i^{th}$ layer.

\item For each undirected edge $(u, v) \in E$, we include directed edges $(u_i, v_{i+1})$ and $(v_i, u_{i+1})$ in $E^*$ for all $0 \le i < 2\ell$.

\item For each demand pair $(s, t) \in P$, we include demand pairs $(s_i, t_{i+\ell})$ in $P^*$ for all $0 \le i \le \ell$.
\end{itemize}

We next claim that the demand pairs in $P^*$ all have unique edge-disjoint paths of length $\ell$ each in $G^*$.
To see this, notice that for any demand pair $(s_i, t_{i + \ell}) \in P^*$, by construction there is a unique $s_i \leadsto t_{i+\ell}$ path of length $\le \ell$ in $G^*$, since there is a unique shortest $s \leadsto t$ path in $G$ which has length $\ell$.
Moreover, there is no $s_i \leadsto t_{i+\ell}$ path of length $\ge \ell+1$, since any such path would terminate at layer $i + \ell + 1$ or beyond.

We now count the statistics of our reachability preserver lower bound.
The number of nodes in $G^*$ is
$$|V^*| =: \overline{n} = \Theta(n \ell) = \Theta\left(n^{\frac{d^2 + 2d + 1}{d^2 + 1}} p^{\frac{-d-1}{d^2+1}} \right) = \Theta\left(n^{\frac{(d+1)^2}{d^2 + 1}} p^{\frac{-d-1}{d^2+1}} \right).$$
The number of demand pairs in $P^*$ is
$$|P^*| =: \overline{p} = \Theta(p \ell) = \Theta\left(n^{\frac{2d}{d^2 + 1}} p^{\frac{d^2 - d}{d^2+1}} \right).$$
Any reachability preserver of $G^*, P^*$ must keep $\ell$ edges per demand pair, since there are unique edge-disjoint paths of length exactly $\ell$ for the demand pairs.
Thus the number of edges required for a reachability preserver $H$ is at least
\begin{align*}
|E(H)| \ge |P^*| \ell &= \Theta\left(n^{\frac{2d}{d^2 + 1}} p^{\frac{d^2 - d}{d^2+1}} \right) \cdot \Theta\left(n^{\frac{2d}{d^2 + 1}} p^{\frac{-d-1}{d^2+1}} \right)\\
&= \Theta\left( n^{\frac{4d}{d^2 + 1}} p^{\frac{d^2 - 2d - 1}{d^2 + 1}} \right).
\end{align*}
We want to phrase this edge lower bound in terms of $\overline{n}$ and $\overline{p}$, rather than $n$ and $p$.
This is a matter of straightforward algebra:
\begin{align*}
|E(H)| &\ge \Theta\left(n^{\frac{4d}{d^2 + 1}} p^{\frac{d^2 - 2d - 1}{d^2+1}} \right)\\
&= \Theta\left(n^{\frac{2d+2}{d^2 + 1}} p^{\frac{-2}{d^2+1}} \right) \cdot \Theta\left(n^{\frac{2d-2}{d^2 + 1}} p^{\frac{d^2 - 2d + 1}{d^2+1}} \right)\\
&= \Theta\left(n^{\frac{(d+1)^2}{d^2 + 1}} p^{\frac{-d-1}{d^2+1}} \right)^{\frac{2}{d+1}} \Theta\left(n^{\frac{2d}{d^2 + 1}} p^{\frac{d^2 - d}{d^2+1}} \right)^{\frac{d-1}{d}}\\
&= \Theta\left(\overline{n}^{\frac{2}{d+1}} \overline{p}^{\frac{d-1}{d}}\right),
\end{align*}
which proves Theorem \ref{thm:unconditional-pairwise-lb}.

\subsection{Lower Bounds in the $P \subseteq S \times V$ Setting}

We now prove lower bounds on $\rpp(n, p, \sigma)$.
The parameters of the construction are $r, w, h, \ell$, which are all positive integers, and which respectively stand for \emph{radius}, \emph{width}, \emph{height}, \emph{layers}.
They will satisfy the inequalities $3 \ell r < h \le w$.

\paragraph{Initial Pairwise Lower Bound.}

We start with $n := w h (\ell+1)$ nodes in our graph; specifically, the vertex set is
$$V = \zz_h \times \zz_w \times \{0, \dots, \ell\}.$$

The first two indices are the integers mod $h$ and the integers mod $w$; we will perform addition in these indices, and this is interpreted as modular arithmetic with rollover.
The last index is just an integer from $0$ to $\ell$ and we will never add to this index in such a way that rollover occurs.
We say that a node is \emph{in layer $i$} if its last index is $i$.
To avoid confusion with edges or demand pairs, we will often write nodes using bracket notation, for example $[v, i]$ where $v \in \zz_h \times \zz_w$ and $0 \le i \le \ell$.

Our next step is to define a set of \emph{critical vectors} $C$.
These are drawn from the following result in prior work:
\begin{theorem} [\cite{BL98}]
There exists a convex set of vectors $C$ in the integer lattice $\zz^2$ of size 
$|C| = \Theta \left( r^{2/3} \right)$
such that all vectors $c \in C$ have length $\le \|r\|_2$, and all vectors lie in the cone formed by positive linear combinations of the vectors $(2, 1)$ and $(1, 2)$.\footnote{Technically \cite{BL98} constructs a set of vectors that do not lie in this restricted cone, but their argument immediately implies that we may have a constant fraction of the vectors in any cone of constant angular width.}
\end{theorem}

By ``convex'' in this theorem, we mean the following specific property: for any distinct vectors $c_1, c_2 \in C$, the projection of $c_1$ in the direction of $c_2$ is shorter than $c_2$ itself.
That is:
$$\left\langle c_1, \frac{c_2}{\|c_2\|_2} \right\rangle < \|c_2\|_2.$$

We will now use these critical vectors to define a set of \emph{critical paths} $\Pi$ that will be important in our analysis.
There are $|\Pi| = hw |C| = \Theta( hwr^{2/3})$ total critical paths.
Specifically: for each node $[v, 0]$ in the $0^{th}$ layer and for each vector $c \in C$, we include the $[v,0] \leadsto [v+\ell c, \ell]$ path that uses the node $[v + ic, i]$ in layer $i$.
We will write this critical path using the shorthand $\pi_{v, c}$.
Our edge set is exactly the set of edges that appear in any critical path.

Let us pause to comment on the construction so far.
Following a proof strategy from \cite{CE06}, one can prove that each critical path is the unique path between its endpoints (we will do so as part of Lemma \ref{lem:crituniq} to follow).
Given this, if we take the endpoints of our critical paths to be our demand pairs, then any reachability preserver must keep all edges in the graph.
Setting $h=w$ then exactly recovers the previous pairwise lower bound in Theorem \ref{thm:unconditional-pairwise-lb} for $d=2$.
Setting $h \ll w$ yields a strictly weaker pairwise lower bound than the pairwise $d=2$ one; the tradeoff is that when $h$ is small we don't have to introduce quite as many source nodes in order to cover our demand pairs.
We overview this process of introducing source nodes next.

\paragraph{Similar Paths and Intuition for the Rest of the Argument.}

Now that we have a good pairwise lower bound construction, our next step is to augment the construction by adding a few additional source nodes $S$, and for each critical path, we will carefully select one node from $S$ and add it to the beginning of the path (giving \emph{extended critical paths}).
There is a danger here: if we add a random or arbitrary node from $S$ to the beginning of each extended critical path, then we might destroy path uniqueness.
Imagine that $\pi_1 = (x_0, \dots, x_k)$ and $\pi_2 = (y_0, \dots, y_k)$ are both critical paths, and hence the unique path between their endpoints.
Imagine that we add a node $s \in S$ to the beginning of both paths, giving $\pi'_1 = (s, x_0, \dots, x_k), \pi'_2 = (s, y_0, \dots, y_k)$.
These remain the unique paths between their endpoints iff there were initially no $x_0 \leadsto y_k$ or $y_0 \leadsto x_k$ paths, which is not guaranteed for all critical paths in our initial construction.

So, our strategy will be to partition our critical paths into a small number of subsets, where within each subset all pairs of paths are \emph{similar} to each other.
Intuitively, if two paths are similar, then they have a certain geometric structure that the above example does not happen, and we \emph{can} safely augment this pair with the same source node from $s$.
Thus, we can safely cover all of our critical paths using one source node per part in the partition.
The formal definition of path similarity is as follows:

\begin{definition} [See Figure \ref{fig:simpaths}]
Two critical paths $\pi_{v, c}, \pi_{v', c'} \in \Pi$ are \emph{similar} if (1) $c=c'$ and (2) there is a vector $b$ orthogonal to $c$ with $4 \|b\|_2 \le w$ and $v + b = v'$ (where this vector addition is performed modularly in $\zz_h \times \zz_w$).
\end{definition}

\begin{figure} [ht]
\begin{center}
\begin{tikzpicture}
\draw [ultra thick] (0, 0) rectangle (12, 2);
\draw [fill=black] (6, 1) circle [radius=0.2];
\node at (-0.5, 1) {$h$};
\node at (6, -0.5) {$w$};
\draw [thick, dashed] (5, 2) -- (7, 0);
\draw [thick, dashed] (3, 2) -- (5, 0);
\draw [thick, dashed] (7, 2) -- (9, 0);
\draw [ultra thick, ->] (6, 1) -- (6.5, 1.5);
\node at (6.5, 1.1) {$c$};
\node at (6, 0.6) {$v$};
\end{tikzpicture}
\end{center}
\caption{\label{fig:simpaths} The critical paths similar to $\pi_{v, c}$ are precisely those that start at any point $v' \in \zz^2$ on one of the dashed lines and which use the same critical vector $c$.}
\end{figure}

In order to understand this definition intuitively, let us imagine that $c = (0, 5)$ (vertical) and $b = (1, 0)$ (horizontal).\footnote{Technically this isn't a valid choice of $c$ since we restricted our critical vectors $c$ to the cone between $(1, 2)$ and $(2, 1)$, but we will briefly forget this technical detail in order to communicate intuition.}
Consider the critical paths $\pi_{(0, 0), c}$ and $\pi_{(1, 0), c}$, which end at the points $[(0, 5\ell), \ell)]$ and $[(1, 5\ell), \ell]$ respectively.
Uniqueness of these critical paths follows directly from the convexity of our critical vectors: every critical vector besides $c$ will have second coordinate $<5$, and thus the only way to gain $5\ell$ points in the second coordinate using $\ell$ steps in our graph is to walk the edge corresponding to $c$ at each step.
But, we notice that this exact same argument implies that there is no path with endpoints $[(0, 0), 0] \leadsto [(1, 5\ell), \ell]$.
Such a path would have to gain $5\ell$ points in its second coordinate using $\ell$ steps; the only way to do so is to walk the edge corresponding to $c$ at each step, but this uniquely defines a path that ends at the point $[(0, 5\ell), \ell]$ rather than $[(1, 5\ell), \ell]$.
By identical logic, there can also be no path with endpoints $[(1, 0), 0] \leadsto [(0, 5\ell), \ell]$.
So it is indeed safe to assign these two critical paths the same source node $s \in S$.

This same basic argument applies for any two critical paths that:
\begin{itemize}
\item Use the same critical vector $c$,
\item Have starting points that differ by a vector orthogonal to $c$ (mod $h, w$ in their respective coordinates), and
\item Obey a certain ``non-rollover'' property: for example, the previous argument might fail if $5\ell < h$, and thus we could entertain $[(0, 0), 0] \leadsto [(1, 5\ell), \ell]$ paths that only gain (say) $5\ell - h$ points in the second coordinate, which still reach the same endpoint since the second coordinate operates mod $h$.
\end{itemize}

This third property is the most technical one to analyze.
It is executed below by ``unrolling'' the construction, moving from the modular space $\zz_h \times \zz_w$ into $\rr^2$ for the sake of analysis.
When we try to prove uniqueness for a class of similar paths $\Sigma$, we will first take the set of points in $\zz_h \times \zz_w$ used as start nodes of paths in $\Sigma$, and we will embed these into a line segment $L \subseteq \rr^2$.
Then we define the ``unrolled space'' $U \subseteq \rr^2$ to be the points that can be reached by starting from any point on $L$, and adding a vector in the cone between $(1, 2), (2, 1)$ of length $\le \ell r$.
In particular, $U$ contains all points that can be reached by starting at some point $\ell \in L$ and adding on $\le \ell$ critical vectors.
The ``non-rollover'' property is that it will turn out that no two points in $U$ are equivalent mod $h, w$, and thus we do not have to worry about modular rollover in our analysis.
This property is enforced by the various parameter inequalities sprinkled through our argument; for example that $3 \ell r < h$, and it is exactly the reason why we restrict our critical vectors to the cone between $(1, 2)$ and $(2, 1)$.

\paragraph{The Rest of the Argument Formalized.}

We now execute the steps outlined above, with full formal detail.

\begin{lemma}
We can build a partition $S$ of $\Pi$ into
$$ | S | = O\left(\frac{|\Pi| r}{w} \right) $$
parts such that any two critical paths in the same part are similar.
We will call these parts \emph{similar classes}.
\end{lemma}
\begin{proof}
For each critical path $\pi_{v, c}$, we define an associated set of critical paths $\Sigma_{v, c}$ as follows.
Let $b$ be a vector orthogonal to $c$ with $\|b\|_2 = \|c\|_2$; that is, if $c = (c_1, c_2)$, then we can take $b := (-c_2, c_1)$.
We then define
$$\Sigma_{v, c} := \left\{\pi_{v + ib, c} \in \Pi \ \mid \ 0 \le i < \frac{w}{8 r} \right\}.$$
Notice that all the paths in $\Sigma_{v, c}$ are similar to $\pi_{v, c}$, but actually $\Sigma_{v, c}$ only contains about half of \emph{all} the paths similar to $\pi_{v, c}$: these parameter restrictions enforce $8 \|b\|_2 \le w$, rather than $4 \|b\|_2 \le w$ as in the original definition.

Now we build our partition $S$ of $\Pi$ iteratively, in two stages.
Initially $S = \emptyset$.
While there is a set $\Sigma_{v, c}$ that is disjoint from every set currently in $S$, add $\Sigma_{v, c}$ to $S$.
Once this terminates, we then iterate through each critical path $\pi_{v,c} \in \Pi$ that is not yet contained in any set in $S$.
For each such path, since we did \emph{not} add $\Sigma_{v, c}$ to $S$, we must have previously added a set that intersects $\Sigma_{v, c}$ to $S$.
But since $\pi_{v, c}$ itself is not in that set, we must have previously added a set of the form $\Sigma_{v +ib, c}$ to $S$ (where as above, $b$ is orthogonal to $c$ with $\|b\|_2 = \|c\|_2$, and $0 \le i < \frac{w}{8r}$).
We then choose to insert $\pi_{v, c}$ into $\Sigma_{v + ib, c} \in S$.

To prove correctness: by construction $S$ is a partition of $\Pi$, and each part has $\Omega(w/ r)$ paths, leading to the claimed bound on the size $|S|$.
It is immediate from the definition that, if two paths are contained in the same set $\Sigma_{v, c}$ in the first stage of the construction, then they are similar.
Moreover, whenever we add a path $\pi_{v, c}$ to a set $\Sigma_{v + ib, c}$ in the second stage of the construction, for any other path $\pi_{v + ib + i'b, c} \in \Sigma_{v+ib, c}$, we have
$$i + i' < \frac{w}{8 r} + \frac{w}{8 r} = \frac{w}{4 r},$$
and thus $\pi_{v, c}$ is indeed similar to each path in $\Sigma_{v + ib, c}$.
\end{proof}

We are now going to return to our construction and add a set of additional nodes $S$, where each node $\Sigma \in S$ represents one similar class from the previous lemma.
For each similar class $\Sigma \in S$ and each critical path $\pi_{v, c} \in \Sigma$, the \emph{extended critical path} $\pi^*_{v, c}$ is the one that uses $\Sigma$ as its first node and then the critical path $\pi_{v, c}$ after that.
We also add edges of the form $(\Sigma, v)$ so that the extended critical paths are indeed paths in our graph.
This completes the construction.
Its key property is:

\begin{lemma} \label{lem:crituniq}
Every extended critical path is the unique path between its endpoints.
\end{lemma}
\begin{proof}
Let $\pi^*_{v, c}$ be an extended critical path and let $\Sigma$ be its similar class.
We will say that a point $u \in \zz_h \times \zz_w$ is \emph{relevant} if there exists a path in our graph from $\Sigma$ to a point of the form $[u, i]$, or it is \emph{irrelevant} otherwise.
Although we have so far described our graph construction in the modular space $\zz_h \times \zz_w$, it will now be helpful to ``unroll'' the construction and associate each relevant point $u \in \zz_h \times \zz_w$ to a single point in $\zz^2$ which is equivalent to $u$ (mod $h, w$ in its respective coordinates).
Specifically, we will treat $v$ itself as the point in $\rr^2$ that happens to have coordinates in $\{0, \dots, h-1\} \times \{0, \dots, w\}$.
Then we define a line segment
$$L := \left\{v + b \in \rr^2 \ \mid \ b \perp c, \|b\|_2 \le \frac{w}{4} \right\}.$$
Finally, let $U \subseteq \rr^2$ be the set of points that can be written as $y + z$ where $y \in L$ and $z$ is a vector of length $\le \ell r$ that lies in the acute cone between $(1, 2)$ and $(2, 1)$.
The important property of $U$ is that for every relevant $u \in \zz_h \times \zz_w$, there is exactly one corresponding point in $U$ that is equivalent to $u$ (mod $h, w$).
To see this, we observe:
\begin{itemize}
\item By construction, every relevant point $u \in \zz_h \times \zz_w$ can be written as the sum of (1) a point $y$ that is equivalent to one of the points in $L$ (mod $h, w$), and (2) a sum $z$ of $\le \ell$ critical vectors.
Since the critical vectors lie in the cone between $(1, 2)$ and $(2, 1)$ and they each have length $\le r$, it follows that $z$ is a vector in the cone between $(1, 2)$ and $(2, 1)$ with length $\le \ell r$.
Thus, $u$ is equivalent to a point in $U$.

\item We now argue that no two distinct points $u_1, u_2 \in U$ are equivalent to each other (mod $h, w$), and so the above equivalence is unique.
Suppose $u_1 \equiv u_2$, and let $u_1 = y_1 + z_1, u_2 = y_2 + z_2$, where as before $y_1, y_2 \in L$ and $z_1, z_2$ are both vectors in the cone between $(1, 2), (2, 1)$ of length $\|z_1\|_2, \|z_2\|_2 \le \ell r$.
By the triangle inequality all points $u \in U$ satisfy
$$\dist(v, u) \le \frac{w}{4} + \ell r < \frac{w}{4} + \frac{w}{2} = \frac{3w}{4},$$
and thus if $u_1 \equiv u_2$ they must in fact have the same second (width) coordinate in $\rr^2$, since their second coordinates cannot differ by a multiple of $w$ and still satisfy this distance inequality.
Noting that $\|(1, 2)\|_2 = \|(2, 1)\|_2 = \sqrt{5}$, the width coordinates of $y_1, y_2$ thus differ by at most
$$\frac{2\ell r}{\sqrt{5}} < \frac{h}{2 \sqrt{5}}.$$
Since $y_1, y_2$ both lie on the line $L$, which is orthogonal to $c$ between $(1, 2)$ and $(2, 1)$, the height coordinates of $y_1, y_2$ thus differ by at most
$$\frac{h}{2 \sqrt{5}} + \frac{2\ell r}{ \sqrt{5}} < \frac{h}{\sqrt{5}}.$$
So the height coordinates of $u_1, u_2$ differ by at most
$$\frac{h}{\sqrt{5}} + \frac{2\ell r}{\sqrt{5}} < \frac{3h}{2 \sqrt{5}} < h.$$
So the height coordinates of $u_1, u_2$ cannot differ by a multiple of $h$, so they are equivalent (mod $h$) iff they are equal.
We have now proved that if $u_1 \equiv u_2$ then $u_1 = u_2$, giving uniqueness as desired.
\end{itemize}


The remainder of the proof treats points as vectors in $U \subseteq \rr^2$, and it follows a geometric potential argument along the lines of \cite{Behrend46, CE06}.
Let $L'$ be the line in $\rr^2$ that contains $L$ as a subsegment, and define a potential function $\phi(u)$ over points $u \in U$ to be the Euclidean distance from $u$ to the closest point on $L$.
Now let us consider any path
$$q := \left(\Sigma, u_0, u_1, \dots, u_{\ell} = v + \ell c \right)$$
(which has the same endpoints as the extended critical path $\pi^*_{v, c}$ for which we are trying to prove uniqueness).
Notice that, by definition of similarity, we must have $u_0 \in L$, and so $\phi(u_0) = 0$.
At the other endpoint, since $c \perp L$, we have $\phi(v + \ell c) = \ell \|c\|_2$.
So between the second and last node of $q$, the potential must increase by $\ell \|c\|_2$ in total.
We can then exploit convexity: for any edge along $q$ of the form $([u_i, i], [u_{i+1}, i+1])$, we have
$$\phi(u_{i+1}) - \phi(u_i) \le \|c\|_2,$$
with equality iff $u_{i+1} = u_i + c$.
Hence, the only way to gain $\ell \|c\|_2$ potential using the $\ell$ edges along $q$ between $[u_0, 0]$ and $[u_{\ell} = v + \ell c, \ell]$ is if we have $u_{i+1} = u_i + c$ for all $i$.
This also implies that $u_0 + \ell c = v + \ell c$, and so $u_0 = v$.
We have thus proved that $q = \pi^*_{v, c}$, and thus $\pi^*_{v, c}$ is the unique path between its endpoints.
\end{proof}

\begin{theorem} \label{thm:main-lb}
In the parameter range $n^4 \sigma^6 \le p^9$, we have $\rpp(n, p, \sigma) = \Omega\left( n^{4/5} p^{1/5} \sigma^{1/5} \right)$.
\end{theorem}
\begin{proof}
We take our demand pairs $P$ to be the endpoints of all extended critical paths.
Hence $P \subseteq S \times V$.
Additionally, we will set parameters such that $\ell := \left\lceil h/r \right\rceil - 1$ (so the $\ell r < h$ parameter restriction holds).
The remaining parameters are set as follows:
\begin{align*}
h &:= n^{1/2} \sigma^{1/2} p^{-1/2}\\
w &:= n^{-3/10} \sigma^{-7/10} p^{13/10}\\
r &:= n^{-3/10} \sigma^{3/10} p^{3/10}.
\end{align*}
We then recap the parameters of our construction.
The number of demand pairs is
$$|\Pi| = \Theta \left(hwr^{2/3}\right) = \Theta(p).$$
The number of source nodes is
$$|S| = O\left(\frac{|\Pi| r}{w}\right) = O\left(\frac{ hwr^{5/3}}{w}\right) = O\left( hr^{5/3} \right) = \Theta(\sigma).$$
The number of nodes is
$$n = |S| + (\ell+1)hw = O\left(\ell hr^{2/3}\right) + O(\ell hw) = O(\ell hw) = O\left( h^2 w r^{-1} \right) = \Theta(n).$$
The number of edges in our construction is
$$|E| = \Omega\left( \ell |\Pi| \right) = \Omega\left( \ell hwr^{2/3} \right) = \Omega\left( h^2 w r^{-1/3}\right) = \Omega\left( n^{4/5} p^{1/5} \sigma^{1/5}\right).$$

The edges of the graph are exactly the edges contained in any extended critical path.
So by Lemma \ref{lem:crituniq}, all edges in the graph must remain in the associated reachability preserver.
Finally, we need to address the restriction in range of parameters inherited from the inequalities $1 \le r$ and $4r \le h \le w$.
The former inequality
$$r = n^{-3/10} \sigma^{3/10} p^{3/10} \ge 1$$
is harmless: it requires that $\sigma p \ge n$, but in the range $\sigma p < n$ our lower bound is no better than $\Omega(n)$, which already holds trivially.
The inequality $4r \le h$ gives
\begin{align*}
4n^{-3/10} \sigma^{3/10} p^{3/10} &\le n^{1/2} \sigma^{1/2} p^{-1/2}\\
4p^{4/5} &\le n^{4/5} \sigma^{4/5}\\
4p &\le n \sigma
\end{align*}
which is also harmless (since $P \subseteq S \times V$ we already have that the number of demand pairs is $O(n \sigma)$, so this inequality only affects implicit constant factors).
Finally, the inequality $h \le w$ gives
\begin{align*}
n^{1/2} \sigma^{1/2} p^{-1/2} &\le n^{-3/10} \sigma^{-7/10} p^{13/10}\\
n^{4/5} \sigma^{6/5} &\le p^{9/5}\\
n^4 \sigma^6 &\le p^9,
\end{align*}
completing the proof.
\end{proof}

\medskip
\paragraph{Acknowledgement.}
We thank several anonymous reviewers for exceptionally helpful comments on the writing of the paper.
We thank Seth Pettie for suggesting a simplification to our lower bound construction, and we thank Nicole Wein, Virginia Vassilevska Williams, and Zixuan Xu for finding a mistake in an earlier draft.

The authors were supported by the grants of Virginia Vassilevska Williams: NSF Grants CCF-1417238, CCF-1528078 and CCF-1514339, and BSF Grant BSF:2012338.
Part of the work was performed while visiting the Simons Institute for the Theory of Computing, Berkeley, CA.

\bibliography{../../../../BIB/references}
	\bibliographystyle{plain}

\end{document}